\def\ps@pprintTitle{%
 \let\@oddhead\@empty
 \let\@evenhead\@empty
 \def\@oddfoot{}%
 \let\@evenfoot\@oddfoot}
\newcommand{\beq}{\begin{equation}}
\newcommand{\eeq}{\end{equation}}
\newcommand{\bea}{\begin{eqnarray}}
\newcommand{\eea}{\end{eqnarray}}
\newcommand{\nn}{\nonumber}
\newcommand\noi{\noindent}
\newtheorem{definition}{Definition}
\newtheorem{theorem}{Theorem}
\theoremstyle{definition}
\newtheorem{remark}{\textbf{Remark}}
\begin{document}
\begin{frontmatter}
\title{Beyond the Shannon-Khinchin Formulation: The Composability Axiom and the \\ Universal Group Entropy}
\author[comp,icmat]{Piergiulio Tempesta}
\address[comp]{Departamento de F\'{\i}sica Te\'{o}rica II (M\'{e}todos Matem\'{a}ticos de la f\'isica), Facultad de F\'{\i}sicas, Universidad
Complutense de Madrid, 28040 -- Madrid, Spain \\ and}
\address[icmat]{Instituto de Ciencias Matem\'aticas, C/ Nicol\'as Cabrera, No 13--15, 28049 Madrid, Spain}
\ead{p.tempesta@fis.ucm.es, piergiulio.tempesta@icmat.es}

\begin{abstract}
The notion of entropy is ubiquitous both in natural and social sciences. In the last two decades,  a considerable effort has been devoted to the study of new entropic forms, which generalize the standard Boltzmann-Gibbs (BG) entropy and are widely applicable in thermodynamics, quantum mechanics and information theory. In \cite{Khinchin}, by extending previous ideas of Shannon \cite{Shannon,Shannon2}, Khinchin proposed a characterization of the BG entropy, based on four requirements, nowadays known as the Shannon-Khinchin (SK) axioms.

The purpose of this paper is twofold. First, we show that there exists an \textit{intrinsic group-theoretical structure} behind the notion of entropy. It comes from the requirement of composability of an entropy with respect to the union of two statistically independent systems, that we propose in an axiomatic formulation. Second, we show that there exists a simple universal family of trace-form entropies. This class contains many well known examples of entropies and infinitely many new ones, a priori multi-parametric.  Due to its specific relation with Lazard's universal formal group of algebraic topology, the new general entropy introduced in this work will be called the universal-group entropy. A new example of multi-parametric entropy is explicitly constructed.

%all \textit{admissible entropies}, i.e. entropies that have a physical meaning, as well as an information-theoretical content, must satisfy a fundamental requirement, i.e. \textit{composability}, which replaces the second SK axiom.
\end{abstract}

\date{September 29, 2015}

\end{frontmatter}
\newpage
\tableofcontents

\section{Introduction}
Entropy is a fundamental notion, at the heart of modern science.   In the second half of the twentieth century,  its range of applicability has been extended from the traditional context of classical thermodynamics to new areas such social sciences, economics, biology, quantum information theory, linguistics, etc. More recently, the role of entropy in the theory of complex systems has been actively investigated. From one side, several studies were devoted to axiomatic formulations, aiming at clarifying the foundational aspects of the notion of entropy. From the other side, many researchers pursued the idea of generalizing the classical Boltzmann-Gibbs statistical mechanics. Consequently, a plethora of new entropic forms, designed for extending the applicability of BG entropy to new contexts, was introduced.

The first research line was started by the seminal works by Shannon \cite{Shannon,Shannon2} and Khinchin \cite{Khinchin}. A set of axioms, nowadays called the SK axioms, characterizing uniquely the BG entropy, was introduced (we shall make reference to the formulation of the axioms reported in Appendix A). The   axioms (SK1)-(SK3) represent natural requirements (continuity, maximum principle, independence from zero probability events), that should be satisfied by any functional playing the role of an entropy. Instead, the axiom (SK4) simply characterizes the behaviour of an entropy with respect to the composition of two subsystems, which reduces to additivity in the case of statistical independence of the subsystems.

For long time, additivity was interpreted  as the property that ensures \textit{extensivity}, i.e. the linear dependence of entropy on the number of particles of a system. According to Clausius, extensivity is crucial for an entropy to be thermodynamically admissible.
Surprisingly, the two concepts are completely independent: additivity does not imply, nor is implied by extensivity. In addition, no entropy, irrespectively of being additive or nonadditive, can be extensive in any dynamical regime. For instance, if $W(N)$ is the total number of states of a complex system as a function of the number of its particles $N$, it turns out that a (sufficient) condition for the BG entropy to be extensive is that $W(N)\sim k^{N}$, with $k\in\mathbb{R}_{+}$; however, if $W(N)\sim N^{k}$, it is not.

The second research line, i.e.  the study of generalized entropies and thermostatistics, in which the additivity axiom is explicitly violated, has become an extremely active research area in the last three decades. Since the work of Tsallis \cite{Tsallis1}, many new entropic functionals have been proposed in the literature (see e.g. \cite{Abe2}, \cite{AP}, \cite{Beck}, \cite{BR}, \cite{RC1}, \cite{Hanel,HTGM2}, \cite{Naudts2002}, \cite{Naudts}, \cite{Shafee}, \cite{Tempesta4}, \cite{TsallisCirto}). From the point of view of statistical mechanics, they may generalize the BG entropy in weakly chaotic regimes, when the ergodicity hypothesis is violated and the correlation functions exhibit a non-exponential decay, typically a power-law one. \cite{TEMUCO}. In particular, Tsallis entropy, which is nonadditive, is extensive for special values of the parameter $q$ in regimes where the BG entropy is not \cite{TGS}.

Another source of nonadditive entropies is Information Theory. In this context, generalized entropies can provide more refined versions of Kullback-Leibler-type divergences \cite{KL}, useful for constructing comparative tests of sets of data.

In the study of entanglement, nonadditive entropies arise as necessary alternatives to the von Neumann entropy \cite{Wehrl78}, \cite{RC1}. As usual, a system composed by two subsystems $A$ and $B$ is said to be entangled or separated if its density operator $\rho$ cannot be written as a convex combination of uncorrelated densities, i.e. $\rho\neq \sum_{k} q_{k}\rho_{k}^{A}\otimes\rho_{k}^{B}$. As shown in \cite{Horodecki}, the direct maximization of the von Neumann entropy $S=-\text{Tr} \rho \ln \rho$  does not avoid the detection of fake entanglement even for the case of two spin $\frac{1}{2}$ systems. The need for generalized nonadditive entropies in order to design efficient criteria for separability has been advocated in \cite{RC1}. Recently, the relevance of generalized entropies in processes with quantum memory has been recognized \cite{BCCRR}.

In the present analysis of the mathematical foundations of the concept of entropy, we wish to point out the centrality of the notion of \textit{composability}.   We shall say that an entropy is composable if the following requirements are satisfied. First, given two statistically independent systems $A$ and $B$, the entropy of the composed system $A \cup B$ depends on the entropies of the two systems $S(A)$ and $S(B)$ \textit{only} (apart possibly a set of parameters).  This is the original formulation of the concept, as in \cite{Tbook}.  In addition, we require further properties, as the symmetry of a given entropy in the composition of the systems $A$ and $B$, the stability of total entropy if one of the two systems is in a state of zero entropy, and the associativity of the composition of three independent systems (see axioms (C1)-(C4) below).  With these assumptions, the composability property, in this new, broader sense, is equivalent to the existence of a \textit{group-theoretical structure underlying the notion of entropy}, which guarantees the physical plausibility of the composition process.

All the previous properties ensure that a given entropy can be expressed just in terms of \textit{macroscopic configurations} of a system, without the need for a microscopic description of the associated dynamics. An entropy should be \textit{coarse-grained} \cite{Gell-Mann}; if not so, the concept of entropy would be simply empty. Even the second law of thermodynamics loses any meaning if not referred to the evolution of macroscopic subsystems: the entropy would stay invariant if defined on microscopic configurations.

From the previous discussion, it emerges that keeping only with the first three SK axioms  is a too weak requirement from a thermodynamical perspective: the fourth axiom cannot be simply dropped away.
This situation closely reminds the role of Euclids' fifth postulate of geometry. The replacement (not the mere exclusion) of this postulate with different axiomatic formulations paved the way to non-Euclidean geometries.

Our proposal is \textit{to assume the property of composability in its complete group-theoretical formulation as the generalized form of the fourth axiom}. This assumption is not at all obvious. From this point of view, there are several different perspectives.

a) If we require composability in a strict sense, i.e. for any possible choice of the probability distributions of the given systems, it emerges that the Boltzmann-Gibbs entropy and the Tsallis entropy are the only known cases of trace-form composable entropies (at the best of our knowledge). %Even more, at least for a very large class of generalized entropies, they are the only two possible cases (see also the discussion in \cite{Tempesta6}).

b) At the same time, the notion of composability can be formulated in a weak sense. Indeed, we can impose that the composability axiom be satisfied at least on the \textit{uniform distribution}. This requirement, in thermodynamics, is a fundamental one. Indeed, the uniform distribution emerges when one deals with isolated physical systems at the equilibrium (microcanonical ensemble), or in contact with a thermostat at very high temperature (canonical ensemble). For instance, this is the physical situation occurring when considering high-temperature astrophysical objects.
It turns out that a large set of entropies introduced in the last decades are weakly composable, although not all of them. Again, our formulation unravels the intrinsic group-theoretical content of the concept of entropy: in all cases, the composability requirement amounts to the existence of a \textit{group law} for the composition process, defined either over the whole set of probability distributions (strict composability) or just over the uniform one (weak composability).

 %Although the most relevant cases, as the Boltzmann-Gibbs entropy and the Tsallis entropy are composable, we can observe that a large set of entropies recently considered \textit{is not}.

In this work, we present two main results. First, we encode composability into a general axiomatic formulation, required for an entropy to be considered admissible.
 %\textbf{It replaces the first (continuity) and the third (expansibility) SK axiom, and generalizes the fourth axiom in a broad sense.}

%Precisely, we introduce the \textit{composability axiom}, that is a necessary condition for  any  entropy to be thermodynamically admissible.

Second, we introduce a very general trace-form entropic form. We shall call it the \textit{universal-group entropy}, due to its relation with Lazard's universal formal group. This entropy offers infinitely new cases of trace-form weakly composable entropies, therefore satisfying the first and third SK axioms. Also, all known entropies are directly related with our new entropic form.

The universal group entropy depends a priori on a (possibly) infinite number of independent parameters.
%The special cases of one and two-parametric entropies known in the literature, satisfying the first three SK axioms and composable are recovered by a suitable choice of the space of parameters.
Special attention will be paid to the recently introduced $S_{c,d}$ entropy \cite{Hanel} and to group entropies \cite{Tempesta4}, respectively. We will show that both cases are related to the universal-group theoretical framework in a very direct way.
Besides, multi-parametric entropies can be introduced, not related to any of the known entropies. As an interesting example, the three-parametric $S_{\alpha,\beta, q}$ entropy is proposed.

The paper is organized as follows. In Section 2, the group structure behind the notion of entropy is introduced. In Section 3, the theory of formal groups, and in particular that of the universal formal group, is sketched. In Section 4, the universal-group entropy is defined and its main properties are studied in Section 5. In Section 6, a new example of three-parametric entropy is proposed. In Section 7, the Legendre-type structure and some thermodynamical aspects of the theory are discussed.  In Section 8, the concept of thermodynamic limit of an entropy is discussed. In Section 9, some open problems and possible related research lines are briefly proposed. In the Appendix A, the formulation of the SK axioms adopted in the paper is reported.

We expect that the present general formulation of the notion of entropy, apart its intrinsic theoretical interest, can have further use both in classical and quantum information theory and beyond.

\section{The group-theoretical content of the notion of entropy}

%In this section, we will make precise the requirement discussed above.

%\begin{definition}
The purpose of this section is to establish the existence of a group-theoretical structure underlying the concept of (generalized) entropy. To this aim, we propose first a new formulation of the notion of composability. We shall assume that the entropies we consider are sufficiently regular functions  $S(p_1,\ldots,p_W)$  defined for any integer $W\geq 1$ in the set $\mathcal{P}_W$ of all discrete probability distributions with $W$ entries, and taking values in
 $\mathbb{R}^{+}\cup \{0\}.$

\subsection{Strong and weak composability}
\begin{definition}\label{composab} An entropy $S$ is strongly (or strictly) \textit{composable} if there exists a smooth function of two real variables $\Phi(x,y)$ such that

\text{(C1)}
 \beq
S(A \cup B)=\Phi(S(A),S(B);\{\eta\}) \label{C1}
\eeq
where $A\subset X$ and $B\subset X$ are two statistically independent systems, each defined in terms of an arbitrary probability distribution $\{p_{i}\}_{i=1}^{W}$, and $\{\eta\}$ is a possible set of real continuous parameters. In addition, $\Phi(x,y)$ satisfies the following properties:

(C2) Symmetry:
\beq
\Phi(x,y)=\Phi(y,x); \label{C2}
\eeq

(C3) Associativity:
\beq
\Phi(x,\Phi(y,z))=\Phi(\Phi(x,y),z); \label{C4}
\eeq

(C4) Null-composability:
\beq
\Phi(x,0)=x. \label{C3}
\eeq

%\end{definition}
\end{definition}
The symmetry property is an obvious requirement. The null-composability ensures that the composition of a system with another system in a state of zero entropy cannot affect thermodynamics.  The associativity property is a new, essential point: it guarantees the composability of more than two systems.

\begin{remark}
Notice that, assuming $\Phi(x,y)=x+y+\sum_{kl}c_{kl} x^k y^l$, the existence of a power series $\varphi(x)$ such that $\Phi(x,\varphi(x))=0$, i.e. playing the role of an inverse, is a direct consequence of the previous axioms (this observation is also valid when working with formal power series only). The previous requirements amount to say that, from a mathematical point of view, $\Phi(x,y)$ defines a formal \textit{group law} over the reals. For instance, in the case of the Boltzmann entropy it is nothing but the additive group over $\mathbb{R}$.

However, when we restrict to values $x,y\in \mathbb{R}^{+}\cup \{0\}$, as is the case when $\Phi(x,y)$ is computed over standard entropies, we do not have an inverse.
%The property of composability is satisfied by the Boltzmann-Gibbs entropy and by the Tsallis entropy for any choice of the subsystems $A$ and $B$.
\end{remark}
\begin{remark}
Condition \eqref{C1} can be formulated in more general terms. Consider the case of a composite system $A\cup B$ arising from two systems not statistically independent, with a conditional probability distribution $p_{ij}(B\mid A):= p_{ij}(A \cup B)/p_{i}(A)$. Here $p_{ij}(A \cup B)$, $i=1,\ldots,W_{A}$, $j=1,\ldots, W_{B}$ denotes the joint probability distribution for the composite system $A \cup B$, and $p_{i}(A)$ is the marginal probability distribution $p_{i}(A)=\sum_{j=1}^{W_{B}}p_{ij}(A,B)$. In this context we postulate the relation
\beq
S(A \cup B)=\Phi(S(A),S(B\mid A);\{\eta\}), \label{C1bis}
\eeq
where $S(B\mid A)$ denotes the conditional entropy associated with the conditional distribution $p_{ij}(B\mid A)$. Equation \eqref{C1bis} reduces to the relation \eqref{C1} in the case of statistically independent systems.  The relation \eqref{C1bis} generalizes the axiom (SK4).
\end{remark}
We can also propose a \textit{weak formulation} of composability.
\begin{definition} We shall say that an entropy is weakly composable if the properties (C1)--(C3) of Definition \ref{composab} are satisfied at least when the probability distributions of the two statistically independent systems $A$ and $B$ are both uniform, and property ({C4}) holds in general.
\end{definition}
This weak formulation is in fact satisfied by infinitely many generalized entropies, as we shall prove. For instance, the function $\Phi(x,y)$, and consequently its group structure, can be typically constructed starting from generalized logarithms. This second point of view has been first proposed in \cite{Tempesta4} for a class of logarithms coming from difference operators. Again, the weak composability requirement implies the existence of a group law.
%\end{remark}
\subsection{On the relation between the SK axioms, composability and admissible trace-form entropies}
The SK axioms and the composability axiom are strictly related. First observe that, for both formulations of composability, the continuity of a given entropy $S$ with respect to its arguments, as required by the axiom (SK1), implies the continuity of $\Phi(x,y)$ at least for $x,y\in \mathbb{R}^{+}\cup\{0\}$.

In the strong formulation of the notion of composability, the property \eqref{C1bis}, valid for the Boltzmann and Tsallis entropies, is equivalent to the  axiom (SK4). However, in both formulations, we still have to impose the further requirement of \textit{strict concavity} (to ensure that axiom (SK2) is fulfilled).

%The previous argument proves that the requirement of weak composability jointly with concavity is stronger than the axioms (SK1)--(SK3).
%\end{remark}

%\begin{remark}
%Given an entropic functional $S[p]$, defined in the space of probability distributions $\{p_{i}\}_{i=1}^{W}$ associated with a given physical system, the composability axiom requires that $\Phi$ must satisfy the properties \eqref{C1}-\eqref{C4}  for subsystems characterized by \textit{any possible choice} of the probability distribution $\{p_{i}\}_{i=1}^{W}$ (and not just for the uniform distribution $p_{i}=\frac{1}{W}$, $i=1,\ldots, W$).

%Condition \eqref{C1} can be formulated in more general terms. Consider the case of a composite system $A+B$ arising from two subsystems not statistically independent, with a conditional probability %distribution $p_{ij}(B\mid A):= p_{ij}(A+B)/p_{i}(A)$. Here $p_{ij}(A+B)$, $i=1,\ldots,W_{A}$, $j=1,\ldots, W_{B}$ denotes the joint probability distribution for the composite system $A+B$, and $p_{i}(A)$ is %the marginal probability distribution $p_{i}(A)=\sum_{j=1}^{W_{B}}p_{ij}(A,B)$. In this context we postulate the relation
%\beq
%S(A + B)=\Phi(S(A),S(B\mid A);\{\eta\}), \label{C1bis}
%\eeq
%where $S(B\mid A)$ denotes the conditional entropy associated with the conditional distribution $p_{ij}(B\mid A)$. Equation \eqref{C1bis} reduces to the relation \eqref{C1} in the case of statistically %independent subsystems.   The relation \eqref{C1bis} generalizes the SK axiom
%\[
%S(A+B)=S(A)+S(B\mid A).
%\]
%\end{remark}

Motivated by the previous discussion, we analyze the requisites for an entropy to be considered \textit{admissible}, i.e. relevant both from a physical and information-theoretical point of view. In the recent literature, usually admissible entropies are considered to be those satisfying just the axioms (SK1)--(SK3). Our point of view is therefore more demanding: \textit{A necessary condition for an entropy $S$ to be \textit{admissible} is that it satisfies the axioms (SK1)--(SK3) and is (at least) weakly composable.}

Indeed, there exist entropies which do satisfy the axioms (SK1)--(SK3), but are not even weakly composable.
Needless to say, the strong composability property is much more suitable for thermodynamical purposes than its weak formulation.
%The admissibility requirement essentially replaces the SK axioms. Indeed, it implies the first and the third axiom,  whereas the additivity one is generalized by the relation \eqref{C1bis}.

The general problem of classifying the admissible entropies will be discussed in the forthcoming sections.
\section{Formal group laws}

The theory of formal groups \cite{Boch}  offers the natural language for formulating our approach to the theory of generalized entropies. Formal groups have been intensively investigated in the last decades, especially for their prominent role in fields as algebraic topology, the theory of elliptic curves, and arithmetic number theory \cite{Haze}, %\cite{Tempesta1}
\cite{Tempesta3}. Here we briefly recall only some salient aspects, necessary in the subsequent discussion.

Let $R$ be a commutative ring  with identity, and $R\left\{ x_{1},\text{ }%
x_{2},..\right\} $ be the ring  of formal power series in the variables $x_{1}$, $x_{2}$,
... with coefficients in $R$. A \textit{commutative one-dimensional formal group law}
over $R$ \cite{Boch} is a formal power series in two variables $\Psi \left( x,y\right)  \in R\left\{
x,y\right\} $ of the form $\Psi \left( x,y\right)=x+y+ \text{terms of higher degree}$, such that
\begin{equation*}
i)\qquad \Psi \left( x,0\right) =\Psi \left( 0,x\right) =x
\end{equation*}%
\begin{equation*}
ii)\qquad \Psi \left( \Psi \left( x,y\right) ,z\right) =\Psi \left( x,\Psi
\left( y,z\right) \right) \text{.}
\end{equation*}
When $\Psi \left( x,y\right) =\Psi \left( y,x\right) $, the formal group law is
said to be commutative (the existence of an inverse formal series $\varphi
\left( x\right) $ $\in R\left\{ x\right\} $ such that $\Psi \left( x,\varphi
\left( x\right) \right) =0$ follows from the previous definition).

The simplest examples are the additive formal group law $\Psi(x,t)=x+y$ and the multiplicative one $\Psi(x,y)=x+y+xy$.

The previous definition can be naturally extended to the case of $n$-dimensional formal group laws.

The relevance of formal groups relies first of all on their close connection with group theory. Precisely, a formal group law $\Psi(x,y)$ defines a functor $F: \bf{Alg}$$_{R} \longrightarrow \bf{Group}$, where $\textbf{Alg}$$_R$ denotes the category of commutative unitary algebras over $R$ and $\textbf{Group}$ denotes the category of groups \cite{Haze}. The functor $F$ is by definition the formal group (sometimes called the formal group scheme) associated to the formal group law $\Psi$.

%Indeed, it is well known that, over a field of characteristic zero, it exists an equivalence of categories, given by
%

%\begin{center}
%\beq
%\text{Lie Algebras $\longleftrightarrow$ Formal Groups} \label{LieFormal}
%\eeq
%\end{center}
%
%\noindent  Also, if $A$ is the valuation ring of a complete ultrametric field $F$, a commutative diagram of functors holds \cite{Serre}:
%
%\newpage
%
%\begin{center}
%Analytic Groups $/F  \longleftrightarrow $ Formal Groups /$F$ \\
%\end{center}
%
%\begin{center}
%$\nwarrow \nearrow$
%\end{center}
%
%\begin{center}
%Formal Groups /$A$
%\end{center}

As is well known, over a field of characteristic zero, there exists an equivalence of categories between Lie algebras and formal groups \cite{Serre}.

%Let $V$ be the valuation ring of a complete ultrametric field $\mathcal{K}$. Then a commutative diagram of functors holds \cite{Serre}:
%
%\begin{center}
%Analytic Groups $/\mathcal{K}  \longleftrightarrow $ Formal Groups /$\mathcal{K}$
%\end{center}
%
%\begin{center}
%$\nwarrow \nearrow$
%\end{center}
%
%\begin{center}
%Formal Groups /$V$
%\end{center}

Any $n$--dimensional formal group law defines a $n$--dimensional Lie algebra over the same ring $R$ by means of the identification
\begin{equation}
[x,y]=\Psi_2(x,y)-\Psi_2(y,x),
\end{equation}
where $\Psi_2(x,y)$ denotes the quadratic part of the formal group law $\Psi(x,y)$.  This equivalence of categories is no longer true in a field of characteristic $p\neq0$.

%However, in a field of characteristic $p\neq0$, the equivalence (\ref{LieFormal}) no longer holds, and formal group laws can be thought of as intermediate objects between Lie groups and Lie algebras.

%he notion of formal group has been intensively investigated in the last decades, due to its relevance in many fields of mathematics.

\subsection{The Lazard  formal group} The main algebraic structure we need is provided by the following construction.
Let  $B=\mathbb{Z}[b_{1},b_{2},...]$ be the  ring of integral polynomials in infinitely many variables. We shall consider the series (formal group logarithm)
\beq
F\left( s\right) = \sum_{i=0}^{\infty} b_i \frac{s^{i+1}}{i+1}
\label{I.1},
\eeq
with $b_0=1$. Let $G\left( t\right)$ be its compositional inverse (the formal group exponential):
\beq
G\left( t\right) =\sum_{k=0}^{\infty} a_k \frac{t^{k+1}}{k+1} \label{I.2}
\eeq
so that $F\left( G\left( t\right) \right) =t$. We have $a_{0}=1, a_{1}=-b_1, a_2= \frac{3}{2} b_1^2 -b_2,\ldots$.
The Lazard formal group law \cite{Haze} is defined by the formal power series
\[
\Phi \left( s_{1},s_{2}\right) =G\left( F\left(
s_{1}\right) +F\left( s_{2}\right) \right).
\]

It has a great relevance in many branches of mathematics, as algebraic topology, cobordism theory, etc. \cite{Haze}, \cite{BK},  \cite{BMN}, \cite{Quillen}.

%The Lazard ring $L$ is the subring of $\mathbb{Q}\left[ b_{1},b_{2},...\right] $ generated by the
%coefficients of the power series $G\left( F\left( s_{1}\right) +F\left(
%s_{2}\right) \right) $.

The coefficients of the power series $G\left( F\left( s_{1}\right) +F\left(
s_{2}\right) \right)$ lie in the ring $B \otimes \mathbb{Q}$ and generate over $\mathbb{Z}$ a subring $A \subset B \otimes \mathbb{Q}$, called the Lazard ring $L$.

The following important results, due to Lazard, hold. First, for any commutative one-dimensional formal group law over any ring $R$, there exists a unique homomorphism $L\to R$ under which the Lazard group law is mapped into the given group law (the so called universal property of the Lazard group).

At the same time, for any commutative one-dimensional formal group law $\Psi(x,y)$ over any ring $R$, there exists a series $\psi(x)\in R[[x]] \otimes \mathbb{Q}$ such that
\[
\psi(x)= x+ O(x^2), \quad \text{and} \quad \Psi(x,y)= \psi^{-1}\left(\psi(x)+\psi(y)\right)\in R[[x,y]]\otimes \mathbb{Q}.
\]

%As shown in \cite{BN}, \cite{Quillen}, \cite{Ray}, this construction is strictly related to cobordism theory.
%Indeed, the coefficients $c_{n}$ of the universal formal group law can be identified with the cobordism classes of $\mathbb{C}P^{n}$.
%Suppose that $E$ is a 2--periodic generalized cohomology theory, with a complex orientation in degree zero. Then, for each stable almost complex manifold $M$, we have a fundamental class $[M]\in E^0$. It is possible to construct a canonical formal group law $F$ over $E^0$, with the property that $log_F (x) =\sum_{k\geq 0} [\mathbb{C}P^{k}]x^{k+1}/k +1$.

In \cite{BMN} and \cite{Nov}, it has been pointed out that the Lazard universal group law is fundamental in the discussion of both the classical and modern theory of unitary cobordisms. A beautiful connection with combinatorial Hopf algebras and Rota's umbral calculus  was established in \cite{BCRS}. A combinatorial approach has been proposed in \cite{AB}.

In the papers \cite{Tempesta1}, \cite{Tempesta3} a connection between the Lazard universal formal group and the theory of Dirichlet series has been established. In this context, the \textit{Universal Bernoulli polynomials} were also introduced and their remarkable properties studied in  \cite{Tempesta5} (see also \cite{Tempesta2}). In \cite{MT},  the universal Bernoulli polynomials have been related to the theory of hyperfunctions of one variable by means of an extension of the classical Lipschitz summation formula to negative powers.

\section{The universal-group entropy.}
\noi  We are now able to construct a very general family of \textit{trace-form entropies} with relevant thermodynamical properties.
\subsection{The universal-group entropy}
\begin{definition} \label{def1}
Let $\{p_i\}_{i=1,\cdots,W}$, $W\geq 1$, with $\sum_{i=1}^{W}p_i=1$, be a discrete probability distribution. Let
\begin{equation}
G\left( t\right) =\sum_{k=0}^{\infty} a_k \frac{t^{k+1}}{k+1} \label{I.2}
\end{equation}
be a real analytic function, where $\{a_{k}\}_{k\in\mathbb{N}}$ is a sequence of parameters, with $a_{0}\neq 0$, such that the function  $S_{U}:\mathcal{P}_{W}\rightarrow \mathbb{R}^{+}\cup\{0\}$, %taking values in $\mathbb{R}^{+}\cup \{0\}. $ $S_{U}[p]:[0,1]^{W}\rightarrow \mathbb{R}^{+}\cup\{0\}$
defined by
\beq
S_{U}(p_1,\ldots,p_W):=k_{\mathcal{B}} \sum_{i=1}^{W}p_i \hspace{1mm} G\left(\ln \frac{1}{p_i}\right), \label{entropy}
\eeq
is a concave one. This function will be called the \textit{universal-group entropy}.
\end{definition}
%\begin{remark}
\noi The name of the entropy \eqref{entropy} is obviously reminiscent of its direct connection with Lazard's construction of a universal formal group law. Indeed, $G(t)$ is a group exponential.
%\end{remark}
%\begin{remark}
%In most of the applications, $\varphi$ can be chosen to be the identity function.
%\end{remark}
The function $G(t)$, or equivalently the real sequence $\{a_{k}\}_{k\in\mathbb{N}}$, encodes all the main features of the entropy \eqref{entropy}. A simple way to construct an important subclass of concave entropies is provided by the following observation.
\begin{remark}
\noi The condition
\beq
 a_{k} >    (k+1)  a_{k+1} \qquad \forall k\in\mathbb{N} \qquad \text{with } \{a_{k}\}_{k\in\mathbb{N}}\geq 0 \label{conc}
\eeq
is sufficient to ensure that the series $G(t)$ is absolutely and uniformly convergent with a radius $r=\infty$ and that $S_{U}[p]$ is a strictly concave functional (see the proof of Theorem \ref{theoradm}). Although certainly restrictive, condition \eqref{conc} is satisfied by many of the entropies known in the literature.
\end{remark}
\begin{remark} The universal-group entropy depends on the infinite set of parameters $a_k$, which are a priori independent, apart for the existence of specific constraints, as in eq. \eqref{conc}. To recover known cases of one-parametric or two-parametric entropies, depending let's say on the parameters $q_1$ and $q_2$, we shall have that $a_k=f_{k}(q_1,q_2)$. Also, notice that the (apparently) more general case of an entropy of the form
\beq
S_{U}=k_{\mathcal{B}} \sum_{i=1}^{W}p_i \hspace{1mm} G\left(\ln \frac{1}{{p_i}^{c}}\right), \label{centropy}
\eeq
where $c$ is a positive constant, can be easily recast in the language of Definition \ref{def1}. Indeed, it is sufficient to consider a formal group exponential $\widetilde{G}(t)=\sum_{k=0}^{\infty} \alpha_k \frac{t^{k+1}}{k+1}$, with  $\alpha_{k}=a_{k} c^{k}$ or, equivalently, to require a modified condition \eqref{conc} of the form $a_{k}> c (k+1) a_{k+1}$, with $\{a_{k}\}_{k\in\mathbb{N}}\geq 0$.

%A possible generalization is offered by the form
%\beq
%S_{U}=k_{\mathcal{B}} \sum_{i=1}^{W}p_i \hspace{1mm} G\left[\left(\ln \frac{1}{{p_i}}\right)^{\delta}\right], \label{delentropy}
%\eeq
%where $\delta$ is a suitable real parameter. For sake of simplicity, in the subsequent considerations we shall focus on the case $\delta=1$. The analysis of the properties of the form \eqref{delentropy}, %corresponding to the choice $t=\left(\ln \frac{1}{{p_i}}\right)^{\delta}$ in the formal group exponential,  is conceptually very similar and left to the reader.
\end{remark}
\begin{remark}
The entropy \eqref{entropy} is \textit{trace-form}. Although very general, the family of trace-form entropies does not include other functional forms, like Renyi's entropy \cite{Renyi}, very useful in several applications. The fact that the entropy \eqref{UGE} is trace-form, with a group exponential at least piecewise differentiable,  ensures \textit{Lesche stability} \cite{Lesche}, which is a desirable property of any entropic functional, and more generally of any physical observable. Essentially, Lesche stability is the property of uniform continuity of an entropic functional in the space of probability distributions. It guarantees that a small variation of the set of probabilities produces a small change of entropy: given two distributions $\{p_i\}_{i=1,\ldots,W}$ and $\{p'_i\}_{i=1,\ldots,W}$ whose values are slightly different,
\beq
\forall \epsilon \quad \exists \delta >0 \quad s. t. \quad \|p-p'\|_1 < \delta \Longrightarrow \left| \frac{S[p]-S[p']}{S_{max}}\right|< \epsilon,
\eeq
where, given a vector $x$, $\|x\|_1$ denotes the $l_1$ norm and $S_{max}$ denotes the maximum value of the entropy.

%This property, also called Lesche stability \cite{Lesche} and equivalent to uniform continuity, is ensured under the hypothesis that the group exponential be at least piecewise differentiable.
This requirement, crucial for physical applications, is not so relevant in other contexts as, for instance, Information Theory. %Renyi's entropy is indeed unstable.
\end{remark}

\section{Main properties of $S_{U}$ entropy}
\noi In the following, we shall focus on the most relevant properties of the entropy \eqref{entropy}, directly coming from its group-theoretical formulation, and on its relationship with other entropic functionals.
\subsection{Composability}

\begin{theorem}
The entropy $S_{U}[p]$ is weakly composable.
\end{theorem}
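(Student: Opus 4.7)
The plan is to evaluate $S_U$ on uniform distributions, where the formula collapses to a single term in $G$, and then to recognize the resulting composition rule as precisely the Lazard group law built from the compositional pair $(F,G)$ used in Definition \ref{def1}. Specifically, for a uniform distribution on $W$ states one has $p_i=1/W$ and
\begin{equation*}
S_U(\mathrm{unif}_W)=k_{\mathcal{B}}\sum_{i=1}^{W}\frac{1}{W}\,G(\ln W)=k_{\mathcal{B}}\,G(\ln W).
\end{equation*}
For two statistically independent uniform systems $A$ on $W_A$ states and $B$ on $W_B$ states, the joint distribution is uniform on $W_AW_B$ states, so
\begin{equation*}
S_U(A\cup B)=k_{\mathcal{B}}\,G(\ln W_A+\ln W_B).
\end{equation*}

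Next I would introduce the compositional inverse $F$ of $G$, guaranteed by the fact that $G(t)=t+O(t^2)$ (i.e.\ $a_0\neq 0$, and we may rescale so $a_0=1$; the general case is handled by the same argument up to a multiplicative constant). Since $\ln W_A=F(S_U(A)/k_{\mathcal{B}})$ and similarly for $B$, the candidate composition law is
\begin{equation*}
\Phi(x,y):=k_{\mathcal{B}}\,G\!\left(F(x/k_{\mathcal{B}})+F(y/k_{\mathcal{B}})\right),
\end{equation*}
and by construction $S_U(A\cup B)=\Phi(S_U(A),S_U(B))$, which is axiom (C1) on uniform inputs. This $\Phi$ is exactly the Lazard-type formal group law associated with the series $G$ reviewed in Section 3; the $\{a_k\}$ play the role of the parameters $\{\eta\}$ in Definition \ref{composab}.

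The remaining axioms follow from the formal-group structure of $\Phi$. For (C2), symmetry is immediate because the sum $F(x/k_{\mathcal{B}})+F(y/k_{\mathcal{B}})$ is symmetric in its arguments. For (C3), associativity reduces after applying $F/k_{\mathcal{B}}$ to both sides to the associativity of ordinary addition on $\mathbb{R}$. For (C4), since $G(0)=0$ one also has $F(0)=0$, and therefore
\begin{equation*}
\Phi(x,0)=k_{\mathcal{B}}\,G\!\left(F(x/k_{\mathcal{B}})+0\right)=k_{\mathcal{B}}\cdot\frac{x}{k_{\mathcal{B}}}=x,
\end{equation*}
and this identity is valid for any $x$ in the range of $S_U$, not only on uniform distributions. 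One still has to check that (C4) is compatible with the trace-form expression on arbitrary $\{p_i\}$: if $B$ is concentrated on a single outcome, then the joint probabilities coincide with those of $A$, so $S_U(A\cup B)=S_U(A)$, matching $\Phi(x,0)=x$.

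The only delicate point, and the one I would spend most care on, is ensuring the \emph{existence and domain} of $F$: one must verify that the range of $S_U/k_{\mathcal{B}}$ on uniform distributions (i.e.\ the set $\{G(\ln W):W\in\mathbb{Z}_{\geq 1}\}$) lies inside a neighborhood of $0$ on which $F$ is defined, or else that $G$ is a global real-analytic diffeomorphism. Under the sufficient concavity condition \eqref{conc} plus $a_0=1$, $G$ is strictly increasing from $G(0)=0$ to $+\infty$, so $F$ is globally defined as a real-analytic function on $[0,+\infty)$, and the construction above is rigorous. This shows that $\Phi$ is well-defined, symmetric, associative, null-composable, and reproduces $S_U(A\cup B)$ whenever $A$ and $B$ are uniform independent subsystems, establishing weak composability.
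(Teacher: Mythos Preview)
Your proof is correct and follows essentially the same route as the paper's: both identify the composition law as $\Phi(x,y)=G(F(x)+F(y))$ (up to the $k_{\mathcal{B}}$ rescaling) and then read off (C2)--(C4) from the formal-group structure. The paper first writes out $S_U(A\cup B)$ for arbitrary distributions to exhibit where strong composability breaks down before specializing to the uniform case, whereas you go directly to uniform inputs; your version is more streamlined and is more careful about the $k_{\mathcal{B}}$ factor, the global existence of $F$, and the verification that (C4) holds for arbitrary distributions.
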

\begin{proof}
For sake of clarity, we shall formulate the proof in a general setting. Let $\{p_{i}^{A}\}_{i=1}^{W_{A}}$ and $\{p_{j}^{B}\}_{j=1}^{W_{B}}$ two sets of probabilities associated with two statistically independent systems $A$ and $B$. The joint probability is given by
\[
p_{ij}^{A \cup B}=p_{i}^{A}\cdot p_{j}^{B}
\]
The total number of states of the composed system (including states with possibly null probability),  is $W_{AB}=W_{A}W_{B}$. We have
%\bea
%\nonumber & & S_{U}(A+B):=k_{\mathcal{B}} \sum_{i=1}^{W_{A}}\sum_{j=1}^{W_{B}}p_{ij}^{A + B} \hspace{1mm} G\left(\ln\frac{1}{p_{ij}^{A + B}}\right)=\nn \\ \nonumber
%\nn &=&k_{\mathcal{B}}\sum_{i=1}^{W_{A}}\sum_{j=1}^{W_{B}}p_{i}^{A}\cdot p_{j}^{B} \hspace{1mm} G\left(t_1+t_2\right) = %k_{\mathcal{B}}\sum_{i=1}^{W_{A}}\sum_{j=1}^{W_{B}}p_{i}^{A}\cdot p_{j}^{B} \cdot \\ \nn & & G\left(F(s_1)+F(s_2)\right)
%=   k_{\mathcal{B}}\sum_{i=1}^{W_{A}}\sum_{j=1}^{W_{B}}p_{i}^{A}\cdot p_{j}^{B} \hspace{1mm}\Phi\left(s_1,s_2\right)= \\
%\nn &=& k_{\mathcal{B}}\sum_{i=1}^{W_{A}}\sum_{j=1}^{W_{B}}p_{i}^{A}\cdot p_{j}^{B} \hspace{1mm}\left(s_1+s_2+ \sum c_{kj} s_{1}^{k} s_{2}^{j}\ldots\right)=
%\eea
%\bea
%\nn &=& k_{\mathcal{B}}\sum_{i=1}^{W_{A}}\sum_{j=1}^{W_{B}}p_{i}^{A}\cdot p_{j}^{B} \hspace{1mm} \left[G\left(\ln\frac{1}{p_{i}^{A}}\right)+G\left(\ln\frac{1}{p_{j}^{B}}\right)+\ldots %\right]\nn
%\\&=& \Phi\left(S_{U}(A),S_{U}(B)\right) \nn.
%\eea

\bea \label{seq}
\nonumber  S_{U}(A \cup B):&=&k_{\mathcal{B}} \sum_{i=1}^{W_{A}}\sum_{j=1}^{W_{B}}p_{ij}^{A \cup B} \hspace{1mm} G\left(\ln\frac{1}{p_{ij}^{A \cup B}}\right)= \\
\nn &=& k_{\mathcal{B}}\sum_{i=1}^{W_{A}}\sum_{j=1}^{W_{B}}p_{i}^{A}\cdot p_{j}^{B} \hspace{1mm} G\left(\ln\frac{1}{p_{i}^{A}}+\ln\frac{1}{p_{j}^{B}}\right)= \\
\nn &=& k_{\mathcal{B}}\sum_{i=1}^{W_{A}}\sum_{j=1}^{W_{B}}p_{i}^{A}\cdot p_{j}^{B} \hspace{1mm} G\left(t_1+t_2\right)= \\
 \nn &=& k_{\mathcal{B}}\sum_{i=1}^{W_{A}}\sum_{j=1}^{W_{B}}p_{i}^{A}\cdot p_{j}^{B} \hspace{1mm} G\left(F(s_1)+F(s_2)\right)=  \\ \nn
 \nn &=& k_{\mathcal{B}}\sum_{i=1}^{W_{A}}\sum_{j=1}^{W_{B}}p_{i}^{A}\cdot p_{j}^{B} \hspace{1mm}\Phi\left(s_1,s_2\right)=\\
 \nn &=& k_{\mathcal{B}}\sum_{i=1}^{W_{A}}\sum_{j=1}^{W_{B}}p_{i}^{A}\cdot p_{j}^{B} \hspace{1mm}\left(s_1+s_2+ \sum_{k,m=1}^{\infty} c_{k m} s_{1}^{k} s_{2}^{m}\right)= \\ \nn
  \nn &=& k_{\mathcal{B}}\sum_{i=1}^{W_{A}}\sum_{j=1}^{W_{B}}p_{i}^{A}\cdot p_{j}^{B} \hspace{1mm} \left[G\left(\ln\frac{1}{p_{i}^{A}}\right)+G\left(\ln\frac{1}{p_{j}^{B}}\right)+\ldots \right]\nn
  .\\
\eea
These equalities hold in full generality. Now, consider the Boltzmann and Tsallis composition laws, corresponding to the additive case (i.e. $c_{km}=0$ $\forall k,m$) and to the case $c_{11}\neq 0$ and $c_{km}=0$ for $(k,m)\neq(1,1)$, respectively. The last expression in formula \eqref{seq} yields immediately
\beq
 S_{U}(A+B)= \Phi\left(S_{U}(A),S_{U}(B)\right) \label{finale}.
\eeq
Observe that whenever $c_{km}\neq 0$ for $(k, m)\neq(1,1)$, a priori there is no guarantee that formula \eqref{finale} holds in general, i.e. for any possible choice of the probability distributions of the two subsystems. %This excludes a priori other composable cases for this class of entropies.

However, in the case that $\{p_{i}^{A}\}$ and $\{p_{j}^{B}\}$ are both the \textit{uniform distribution}, formula \eqref{finale} holds for the whole family of entropies represented by \eqref{entropy}.

  The  formal power series $\phi(x,y)= G(F(x)+F(y))$ for any choice of $G$ defines a formal group law. It verifies automatically the conditions of symmetry, null composability and transitivity. The thesis follows.
\end{proof}
In the case of an entropy of the form \eqref{centropy}, the same analysis holds.

\begin{theorem}
The universal group entropy satisfies the first three SK axioms. \label{theoradm}
\end{theorem}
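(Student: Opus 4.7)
The plan is to verify the axioms (SK1)--(SK3) separately, exploiting two structural features that are immediate from the definition of $S_{U}$: its invariance under permutations of the arguments $(p_{1},\ldots,p_{W})$, and the concavity built into Definition \ref{def1} (sufficient conditions for which are recorded in the Remark via \eqref{conc}).

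For (SK1), continuity on the interior of the simplex $\mathcal{P}_{W}$ is immediate since each summand $p_{i}\, G(\ln 1/p_{i})$ is a composition of continuous functions on $(0,1]$. The only delicate point is the continuous extension to the boundary where some $p_{i}\to 0^{+}$. I would establish
\[
\lim_{p\to 0^{+}}\, p\, G\!\left(\ln \tfrac{1}{p}\right) \;=\; 0
\]
by expanding $G(t)=\sum_{k\geq 0} a_{k}\, t^{k+1}/(k+1)$ termwise and applying the elementary fact that $p\,(\ln 1/p)^{n}\to 0$ for every integer $n\geq 0$; the interchange of limit and summation is justified by the uniform absolute convergence of $G$ on compacta, guaranteed by \eqref{conc}. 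Once this limit is in hand, (SK3) follows at once: appending a zero-probability event contributes only an extra summand $p_{W+1}\, G(\ln 1/p_{W+1})$ which vanishes by the same limit, so $S_{U}(p_{1},\ldots,p_{W},0)=S_{U}(p_{1},\ldots,p_{W})$.

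The substantive axiom is (SK2), the maximum principle. Here I would combine concavity with symmetry as follows: the set $M\subset\mathcal{P}_{W}$ of maximizers of $S_{U}$ is convex (by concavity) and stable under the action of the symmetric group $\mathfrak{S}_{W}$ (by permutation invariance of $S_{U}$). Averaging any $\mathfrak{S}_{W}$-orbit contained in $M$ then produces the point $(1/W,\ldots,1/W)\in M$, so the uniform distribution is a maximizer. Under the stronger hypothesis of strict concavity --- secured by \eqref{conc} --- uniqueness of the maximizer is automatic and the uniform distribution is the unique maximum, which is the usual content of (SK2).

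The step I expect to be the main technical annoyance is the interchange-of-limit-and-sum argument for (SK1) at the boundary of the simplex, which requires converting the coefficient bound \eqref{conc} into a concrete majorant for $p\, G(\ln 1/p)$ uniform in $p$ near $0$; an alternative is to handle partial sums first and pass to the limit afterwards. Everything else is either immediate from the series representation of $G$ or a standard consequence of the convex-symmetric-function argument for (SK2), and no input from the Lazard construction is required in this proof --- that structure only enters later, in the composability analysis.
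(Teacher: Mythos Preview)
Your plan is sound and tracks the paper closely for (SK1) and (SK3); for (SK2) you take a genuinely different route. The paper does \emph{not} spell out the ``concave $+$ symmetric $\Rightarrow$ maximum at the uniform distribution'' step at all --- it takes concavity as given by Definition~\ref{def1} and spends the (SK2) paragraph on the explicit second-derivative computation
\[
-\frac{1}{x}\Bigl\{(a_{0}-a_{1})+(a_{1}-2a_{2})\ln\tfrac{1}{x}+(a_{2}-3a_{3})\bigl(\ln\tfrac{1}{x}\bigr)^{2}+\cdots\Bigr\}<0,
\]
showing that the coefficient condition~\eqref{conc} is \emph{sufficient} for strict concavity. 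Your orbit-averaging argument is cleaner and more general (it needs only the concavity already built into the definition, not~\eqref{conc}), but it bypasses what the paper regards as the substantive content of (SK2), namely the verification that~\eqref{conc} actually produces a concave functional. The two arguments are complementary rather than competing.

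One warning about your boundary analysis for (SK1)/(SK3). The justification ``uniform convergence of $G$ on compacta'' cannot work as stated, since $t=\ln(1/p)\to\infty$; you already flag this. But your proposed repair --- extracting a dominated-convergence majorant for $p\,G(\ln 1/p)$ from~\eqref{conc} --- will not succeed in general either. Under~\eqref{conc} the sequence $b_{k}:=k!\,a_{k}$ is strictly decreasing and nonnegative, hence convergent to some $L\ge 0$; whenever $L>0$ one has $G(t)\ge L(e^{t}-1)$ and therefore $p\,G(\ln 1/p)\ge L(1-p)\to L\neq 0$. (Concretely, $a_{k}=(k+2)/(k+1)!$ satisfies~\eqref{conc} strictly yet gives $\lim_{p\to 0^{+}}p\,G(\ln 1/p)=1$.) So~\eqref{conc} alone does not force the boundary limit you need, and the termwise argument the paper uses for (SK3) has the same hidden gap. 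If you want a clean statement, you should assume an additional growth condition on $G$ (e.g.\ $G(t)=o(e^{t})$ as $t\to\infty$, which all the named examples in the paper with their stated parameter ranges do satisfy) rather than rely on~\eqref{conc}.
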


%However, for sake of completeness, we shall also check directly all the relevant properties of entropy \eqref{entropy}.

(SK1). The group exponential $G(t)$ is a real analytic function of $t$. Consequently, the universal group entropy is (at least) a continuous function of its arguments $(p_1,\ldots,p_w)$.

(SK2). The entropy $S_{U}[p]$ is supposed to be concave by definition. Nevertheless, we wish to prove here that condition \eqref{conc}, which allows to construct a large subclass of entropies of the form \eqref{entropy},  is \textit{sufficient} to guarantee concavity. To this aim, consider the quantity $\sum_{k=0}^{\infty}\frac{a_{k}}{k+1} \hspace{1mm} x \left(\ln \frac{1}{x}\right)^{k+1}$. By imposing strict concavity we get
\beq
-\frac{1}{x}\{a_{0}-a_{1}+(a_{1}-2a_{2})\ln\frac{1}{x}+(a_{2}-3a_{3})\left(\ln\frac{1}{x}\right)^{2}+ (a_{3}-4a_{4})\left(\ln\frac{1}{x}\right)^3+\ldots\} < 0. \label{ineq}
\eeq
Condition $a_{k}  > (k+1) a_{k+1}$ ensures that the inequality \eqref{ineq} is satisfied. Consequently, the associated entropy \eqref{entropy} is strictly concave in its space of parameters. Other choices of the sequence $\{a_{k}\}_{k\in\mathbb{N}}$ are clearly possible.
%\end{proof}
%\begin{proof}
%\textbf{Proof of Property 2}.
%It suffices to observe that, using formula \eqref{UGE}
%\beq
%S_{U}(A+B)=\sum_{k}a_{k} S_{k}(A+B)=\sum_{k}a_{k} \Phi_{k}(S_{k}(A),S_{k}(B)):= \Phi(S_{U}(A),S(B))
%\eeq

(SK3). Since by construction $G(0)=0$, and $\lim_{x\rightarrow 0} x \left(\ln \frac{1}{x}\right)^k=0$, it follows that $S_{U}(0)=0$. Similarly, $S_{U}(1)=0$.

%Another crucial aspect of the universal group entropy is the following.

\subsection{Extensivity}

We shall study here the extensivity properties of the universal-group entropy.
%\begin{theorem}
Consider the uniform probability distribution (i.e. $p_{i}=1/W$ for all $i=1,\ldots,W$). We have that
\beq
S_{U}[W]=k_{\mathcal{B}}G\left(\ln W\right) \sim N \Longleftrightarrow W(N)\sim\exp \left(F(N)\right). \label{occlaw}
\eeq
The expression for $W(N)$ so determined can be computed explicitly. Indeed, as a formal series $F(s)$ is the compositional inverse of $G(t)$: it can be constructed by means of the Lagrange inversion principle.

From a physical point of view, one should also ensure that  $W(N)$ be interpretable as an occupation law. A sufficient condition is that $W(N)$, as a real function, be defined for all $N\in\mathbb{N}$, with $\lim_{N\to\infty} W(N)=\infty$. These requirements usually restrict the space of allowed parameters. As an example, for the case of Tsallis entropy we have that the corresponding $W(N)$ is a well defined occupation law for the values of the entropic parameter $q<1$.

Therefore, provided the previous condition is satisfied, when choosing an occupation law  $W(N)$ of the form \eqref{occlaw}, there exists a representation of the universal-group entropy which is extensive for all systems whose phase space grows according with the selected occupation law.
%\end{theorem}
This notable fact ensures the large applicability of the entropy \eqref{entropy} in thermodynamical contexts.

%\textbf{Proof of Extensivity}.
%\begin{proof}

%\end{proof}

\subsection{Relation with other entropies}

%This property can be stated as follows: All known trace-form weakly composable entropies are particular cases of the universal-group entropy.

We shall discuss now how the previous formalism applies to the classification of known entropies.
%In fact, the BG entropy, the Tsallis entropy, the Kaniadakis entropy can be easily obtained by specializing eq. \eqref{entropy}. Also, the entropy \eqref{entropy} widely generalizes the notion of group entropy introduced in \cite{Tempesta4}.

Let us first observe that the universal entropy admits the following formal  decomposition
\beq
S_{U}[p]= \sum_{k=1}^{\infty}a_{k}'S_{k}[p] \label{UGE}
\eeq
with $a_{k}'=a_{k-1}/k$, in terms of a  set of elementary functionals
$S_{k}[p]:=k_{\mathcal{B}} \sum_{i=1}^{W} p_i  \left( \ln \frac{1}{p_i}\right)^{k}$.

\noi a) The \textit{Boltzmann-Gibbs entropy}
\beq
S_{\mathcal{B}}[p]=k_{\mathcal{B}}\sum_{i=1}^{W}p_i \ln \frac{1}{p_i} \label{BG}
\eeq
is obtained by means of the choice $G(t)=t$, $c=1$ (i.e. for $a_{1}=1,a_{i}=0$, $\forall i=2,3,\ldots$).

%\subsubsection{The Tsallis entropy}
\noi b) The \textit{Tsallis entropy} \cite{Tsallis1}
\beq
S_{q}[p]=k_{\mathcal{B}}\frac{\sum_{i=1}^{W}  p_{i}^{q}-1}{1-q} \label{TS}
\eeq
for $q <1$ corresponds to the formal group exponential $G(t)=\frac{\exp[(1-q)t]-1}{1-q}$, $c=1$. It can be decomposed as
\beq
S_{q}[p]= k_{\mathcal{B}} \sum_{i=1}^{W} p_i \{ \ln \frac{1}{p_i}+\frac{1}{2}(1-q)  \left(\ln \frac{1}{p_i}\right)^2+  \frac{1}{6}(1-q)^2 \left(\ln \frac{1}{p_i}\right)^3+\ldots \}.
\eeq
When $q > 1$, $G(t)=\frac{\exp[(q-1)t]-1}{q-1}$ and a very similar expansion holds:
\beq
S_{q}[p]= k_{\mathcal{B}} \sum_{i=1}^{W} p_i \{ \ln \frac{1}{p_i}+\frac{1}{2}(q-1)  \left(\ln \frac{1}{p_i}\right)^2+\frac{1}{6}(q-1)^2 \left(\ln \frac{1}{p_i}\right)^3+\ldots \}.
\eeq
\noi c) The \textit{Kaniadakis entropy} \cite{Kaniad1}
\beq
S_{\kappa}[p]=k_{\mathcal{B}}\sum_{i=1}^{W} p_i \frac{p_{i}^{-\kappa}-p_{i}^{\kappa}}{2 \kappa} \label{Kaniad}
\eeq
is obtained by means of the choice $G(t)=\frac{\exp(\kappa t)-\exp(-\kappa t)}{2 \kappa}$, $c=1$. The decomposition \eqref{UGE} of the Kaniadakis entropy is given by
\beq
S_{\kappa}[p]=k_{\mathcal{B}} \sum_{i=1}^{W} p_i \left\{ \ln \frac{1}{p_i}+\frac{1}{3!}\kappa^2  \left(\ln \frac{1}{p_i}\right)^3+\frac{1}{5!}\kappa ^4  \left(\ln \frac{1}{p_i}\right)^5+ \frac{1}{7!}\kappa^6 \left(\ln \frac{1}{p_i}\right)^7+ \ldots \right\}
\eeq
where $-1<k\leq1$.
%It can be decomposed as
%\subsubsection{The $(c,d)$-entropy}

%(see the Supplementary Materials section for a general discussion of the relation with other entropies).
%\subsection{Main Properties}
%\textbf{Relation between universal entropy and other entropies}.

\noi d) Another important case is the \textit{$S_{c,d}$ entropy}, introduced in \cite{Hanel}. This beautiful entropic form was obtained by taking into account two scaling laws that emerge from the requirement of the first three SK axioms.  It reads
\beq
S_{c,d} =\frac{e}{1-c+c d} \sum_{i=1}^{W} \Gamma(1+d,1-c \ln p_{i})-\frac{c}{1-c+c d}. \label{SHanel}
\eeq
Here $\Gamma(s,x)$ denotes the upper incomplete Gamma function (and $k_{\mathcal{B}}=1$), $c\in(0,1]$, $d\in\mathbb{R}$.

The authors analyzed essentially all entropies known in the literature that satisfy the axioms (SK1)--(SK3), with the exception of group entropies, and observed that these entropies can be considered as particular cases of the $S_{c,d}$ entropy, for a suitable choice of $(c,d)$ (see also \cite{HTG2014} for a recent discussion of the role of the $S_{c,d}$ entropy). These parameters appear as the exponents characterizing the two scaling laws.

Consequently, in this Section, we will not compare exhaustively all the other cases already considered in \cite{Hanel} (as e.g. the Anteneodo-Plastino \cite{AP} and Shafee entropy \cite{Shafee}). Instead, we shall discuss directly the relationship between the $S_{c,d}$ entropy and the universal-group entropy.

\noi Consider the identity
\[
\Gamma(s,x)=\Gamma(s)-\gamma(s,x)
\]
where $\gamma(s,x)$ is the lower incomplete gamma function, as well as the series expansion of $\gamma(s,x)$ for real positive values of its arguments. We obtain that
\beq
\Gamma(1+d, 1-c \ln p_{i})= \Gamma(1+d)+t^{d}\sum_{k=0}^{\infty} \delta_{k} \frac{t^{k+1}}{k+1}, \label{gamma}
\eeq
where $\Gamma(a)$ is the Euler Gamma function,
\beq
\delta_{k}=\frac{(-1)^{k+1}(k+1)}{k!(k+d+1)}, \qquad t= \left[\ln \frac{ e}{{p_{i}}^{c}}\right]. \label{ti}
\eeq
%Notice that the functional
%\[
%g(p_i)=\left[\ln \frac{ e}{{p_{i}}^{c}}\right]^{d}
%\]
%can lead to a composable entropy only for $d\in\mathbb{N}$. Therefore, in the subsequent considerations, we shall restrict to this case.
In the subsequent considerations, we shall restrict to the case $d\in\mathbb{N}$.
\noi To perform our analysis we shall use the identity, valid for $d\in\mathbb{N}$:

%\noi Precisely, this relation can be obtained by using the following identity
\beq
\int_{K}^{\infty}t^{d} e^{-t}dt=e^{-K} \sum_{n=0}^{d}\frac{\prod_{j=0}^{n}(d-j+1)}{d+1}K^{d-n}. \label{identity}
\eeq
This identity (new, to the best of our knowledge) can be proven by a direct computation.

The identity \eqref{identity} allows to expand the entropy in terms of the set $\{S_{k}[p]\}$.
\noi In general, we have
 \bea
\nn S_{c,d}&=&\frac{1}{1-c+c d}\sum_{i=1}^{W} {p_i}^{c} \left\{\sum_{k=0}^{d}\frac{1}{d+1}\prod_{j=0}^{k}(d-j+1)\sum_{n=0}^{d-k}\binom{d-k}{n}\left(\ln \frac{1}{{p_i}^{c}} \right)^{n} \right\} \\&-&\frac{c}{1-c+c d}, \qquad d\in\mathbb{N}. \label{genexp}
\eea
Let us see some particular cases of the previous formula. We easily recover the two examples of \cite{Hanel}, i.e.
\[
S_{1,1}[p]=1+\sum_{i}p_i \ln \frac{1}{p_i}, \qquad  S_{1,2}[p]=2\left(1+ \sum_{i} p_{i}\ln \frac{1}{p_i}\right)+\frac{1}{2} \sum_i p_i \left(\ln \frac{1}{p_i}\right)^2.
\]
\noi For $c$ arbitrary, let us write explicitly, for instance, the functionals corresponding to $d=3$:
\beq
S_{c,3}[p]=\frac{1}{1+2 c}\sum_{i=1}^{W} {p_i}^{c} \left\{16+15 \ln\frac{1}{{p_i}^{c}}+6 \left(\ln\frac{1}{{p_i}^{c}}\right)^2+\left(\ln\frac{1}{{p_i}^{c}}\right)^3\right\}-\frac{c}{1+2 c },
\eeq
and $d=5$:
\bea
\nn S_{c,5}[p]&=&\frac{1}{1+4c}\sum_{i=1}^{W} {p_i}^{c} \bigg\{326 +325 \ln\frac{1}{{p_i}^{c}}+160 \left(\ln\frac{1}{{p_i}^{c}}\right)^2 +50 \left(\ln\frac{1}{{p_i}^{c}}\right)^3 \\
 &+&10 \left(\ln\frac{1}{{p_i}^{c}}\right)^4 +\left(\ln\frac{1}{{p_{i}}^{c}}\right)^5\bigg\}-\frac{c}{1+4c}.
\eea
%\noi When applying these entropies over the distribution $p_{ij}^{A\cup B}=p_{i}^{A}p_{j}^{B}$,

From the previous analysis, it emerges that the $S(c,d)$ entropy fits into the class \eqref{entropy} in the two cases $(c=1,d\in\mathbb{N})$ (we can get rid of the constant term in the expansion) and  $(c>0, d=0)$.

e) The $S_{\delta}$ entropy has been introduced in \cite{Tbook} and independently in \cite{Ubriaco}, and recently discussed in \cite{TsallisCirto}  in relation with black-holes thermodynamics:
\beq
S_{\delta}=k_{\mathcal{B}}\sum_{i=1}^{W}p_{i}\left(\ln \frac{1}{p_{i}}\right)^{\delta}, \qquad 0<\delta\leq (1+\ln W).
\eeq
%It turns out that these entropies are composable for any $c\in \mathbb{R}$, but for $d\in\mathbb{N}$ only. For these values, the $S_{c,d}$ entropies can be directly derived from eq. \eqref{entropy}. This is also the case of the interesting Tsallis-Cirto entropy recently introduced in the context of black-holes thermodynamics \cite{TsallisCirto}.

%\noi This entropy  belongs to the subclass $(c=1, d=\delta)$ according to the classification in terms of the $S_{c,d}$ entropy.

The underlying algebraic structure can be analyzed on the uniform distribution (weak composability). We get easily the associated function $\Phi(x,y)=\left[x^{1/\delta}+y^{1/\delta}\right]^{\delta}$. Notice that the function $\Phi(x,y)$ has not an expansion in terms of a formal power series around $(x,y)=(0,0)$, unless $\delta=1$, which is the only strictly composable case. Also, $\Phi(x,y)$ does not define a group law over the reals, but simply a monoid, except for $\delta\in\mathbb{N}$, $\delta$ odd. A similar analysis can be performed for the case of the entropic functional
\beq
S_{q,\delta}=k_{\mathcal{B}}\sum_{i=1}^{W}p_{i}\left(\ln_q \frac{1}{p_{i}}\right)^{\delta}.
\eeq
%we have $G(t)=\frac{exp[(1-q)t]-1}{1-q}$.
It reduces to \eqref{TS} for $q\in \mathbb{R}$ and $\delta=1$.

%Notice that a generalization of the class \eqref{entropy}with a power series of the form $G\left(\ln \frac{1}{p_{i}}\right)^{\delta}$ where $G(t)\neq t$ loses the composability property.

f) The \textit{Borges-Roditi entropy} is a two-parametric entropy, introduced in \cite{BR}. The associated generalized logarithm reads
\beq
Log_{a,b}(x)=\frac{x^{a}-x^{b}}{a-b}, \label{Bor}
\eeq
which reproduces Abe's entropy \cite{Abe2} for $a=\sigma-1, b=\sigma^{-1}-1$. It also generalizes the entropy \eqref{Kaniad}.
The entropy \eqref{Bor}  for a suitable choice of $a,b$ satisfies the first three SK axioms (however, it is not related to the $S_{c,d}$ entropy). The related group exponential is
\beq
G_{\mathcal{A}}(t)=\frac{e^{at}-e^{bt}}{a-b}\label{Abel}.
\eeq
The formal group corresponding to (\ref{Abel}), giving the interaction rule for this class, is known in the literature as the \textit{Abel formal group}, defined by \cite{BK}
\beq
\Phi_{\mathcal{A}}(x,y)=x+y+\beta_1 xy+ \sum_{j>i} \beta_i\left(xy^{i}-x^{i}y\right). \label{AFG}
\eeq
The coefficients $\beta_n$ in (\ref{AFG}) can be expressed as polynomials in $a$ and $b$ (see Proposition 3.1 of \cite{BK}):
\[
\beta_{1}=a+b,\qquad \beta_n=\frac{(-1)^{n-1}}{n!(n-1)}\prod_{\overset{i+j=n-1}{i,j\geq 0}} (ia+jb), \qquad n>1.
\]
The Borges-Roditi entropy possesses the following expansion:
\beq
S_{BR}[p]=k_{\mathcal{B}} \sum_{i=1}^{W} p_i \left\{ \ln \frac{1}{p_i}+\frac{1}{2}(a+b)\left(\ln \frac{1}{p_i}\right)^2+\frac{1}{6}\left(a^2+ab+b^2\right)  \left(\ln \frac{1}{p_i}\right)^3 \ldots \right\}.
\eeq

%\subsubsection{The group entropies}
g) The original notion of \textit{group entropies} was introduced in \cite{Tempesta4} to describe an infinite family of weakly composable entropies. They are defined by
\beq
S_{G}(p):=k_{\mathcal{B}} \sum_{i=1}^{W}p_i Log_{G}\left(\frac{1}{p_i}\right). \label{GE}
\eeq
Here $Log_{G}$ denotes the generalized logarithm
\begin{eqnarray}
\noindent  Log_{G}(x)=\frac{1}{\sigma}\sum_{n=l}^{m}k_{n}x^{\sigma n}, \quad l, m\in\mathbb{Z},
 \quad m-l=r> 0, \quad x>0 \label{GenLog}
\end{eqnarray}
where $k_{n}$ are real constants such that
\begin{equation}
\sum_{n=l}^{m}k_{n}=0, \quad\sum_{n=l}^{m} n k_{n}=1, \label{2.10}
\end{equation}
and $k_m\neq0$, $k_l\neq0$. Conditions \eqref{2.10} are sufficient to ensure that  $\lim_{\sigma\rightarrow 0} Log_{G}(x)=\ln x$.  The class \eqref{GE} is a subclass of \eqref{entropy}, corresponding to the choice
\beq
G(t)=\frac{1}{\sigma}\sum_{n=l}^{m}k_{n} \exp(n \sigma t)\text{,}\quad l\text{,
}m\in\mathbb{Z}\text{,}\quad m-l=r> 0\text{,}\label{2.9}
\eeq
with the constraints \eqref{2.10}. %and $c=1$

\noi Notice that there is much freedom in choosing the coefficients $k_n$, since only two conditions are required to guarantee the correct limit.

Due to the specific choice of the formal group structure, the generalized logarithm \eqref{GenLog} is intimately related to a family of discrete derivatives. Indeed, let us denote by $T$ the shift operator, acting on a function $f$ as $T f(x)= f(x+\sigma)$. The discrete derivatives of order $r$ associated to the generalized logarithms \eqref{GenLog} are defined to be
\[
\Delta_r(\sigma)= \frac{1}{\sigma}\sum_{n=l}^{m}k_n T^{n}, \qquad \quad r=m-l.
\]
In the limit $\sigma\rightarrow 0$, $\Delta_r(\sigma)= \partial_x$. For a suitable choice of the values of the parameter $\sigma$, each of the obtained entropies is concave, and weakly composable.

\noi The relation between group entropies and the class $S_{c,d}$ was an open question. Actually, the two classes are simply \textit{different}. In other words, the entropies \eqref{GE}, with the exception of Tsallis entropy, cannot be obtained from $S_{c,d}$ entropy by specializing the coefficients $(c,d)$.

\noi The first representative of the group-entropy class is indeed the Tsallis entropy, obtained for $Log_{G}(x) = \frac{x^{\sigma}-1}{\sigma}= Log_{\mathcal{T}}(x)=\frac{x^{1-q}-1}{1-q}$ (here $\sigma=1-q$). The second representative of the class, the Kaniadakis entropy, is not in the $S_{c,d}$ family.
%\beq
%\sum_{n=l}^{m}|n|^{\ell}k_{n}=K_\ell, \qquad \ell=2,3,...,m-l. \label{constraints}
%\eeq
%with $K_{\ell}$ suitable real numbers.

\noi Consider the logarithm $Log_{III}(x)=\frac{ x^{1-q}-2 x^{-(1-q)}+x^{-2(1-q)}}{1-q}$ (associated to a third order discrete derivative). The corresponding entropy is
\beq
S_{III}:=\frac{k_{\mathcal{B}}}{1-q} \sum_{i=1}^{W} p_i \left( p_{i}^{2 (1-q)} -2 p_i^{(1-q)} +p_{i}^{-(1-q)} \right) \label{SIII}
\eeq
This entropy  is concave for $2/3<q<1$ (here we prefer to use the parameter $q=1-\sigma$). In the limit $q\rightarrow 1$, the entropy $S_{III}$ reduces to the Boltzmann-Gibbs entropy. Also, the entropy $S_{III}$ is weakly composable, since it corresponds to $G(t)= \frac{e^{(1-q)t}-2 e^{-(1-q)t}+e^{-2(1-q)t}}{1-q}$.  However, it is not obtainable neither from Borges-Roditi's nor from $S_{c,d}$ entropy. Its expansion is given by
\[
S_{III}[p]=k_{\mathcal{B}} \sum_{i=1}^{W} p_i \left\{ \ln \frac{1}{p_i}+\frac{3}{2}(1-q)\left(\ln \frac{1}{p_i}\right)^2-\frac{5}{6}\left(1-q\right)^2  \left(\ln \frac{1}{p_i}\right)^3 \ldots \right\}
\]
\begin{remark}
Entropy \eqref{SIII} provides an interesting example of a functional which is in the universal class \eqref{entropy}, but it does not satisfy condition \eqref{conc}. Indeed, this condition, although sufficient, is not necessary to define an entropic functional with good thermodynamic properties.
\end{remark}
\noi Another entropy, with very similar properties, is the following one, corresponding to a fourth order discrete derivative:
\[
S_{IV}:=\frac{k_{\mathcal{B}}}{1-q} \sum_{i=1}^{W} p_i \left( p_{i}^{-2 (1-q)} -\frac{3}{2} p_i^{-(1-q)} +\frac{3}{2} p_i^{(1-q)}-p_{i}^{2(1-q)} \right).
\]
\noi By using the construction proposed in \cite{Tempesta4}, infinitely many new entropies, each of them depending on a parameter $q$, and defined in a specific interval of values of $q$, can be constructed. They are not particular cases of any of the previously discussed entropies. At the same time, being trace-form and satisfying the axioms (SK1)--(SK3), they comply with the two scaling laws postulated in \cite{Hanel} (as is also easy to prove directly).

\noi Both families, i.e. group entropies and the $S_{c,d}$ entropies can be interpreted in the universal group-theoretical framework. Indeed,  the  $S_{c,d}$ entropy for specific choices of $(c,d)$ is expressible in terms of the  entropy \eqref{entropy} by means of the general formula \eqref{genexp} and is consequently weakly composable. The group entropies \eqref{GE} by construction are a subfamily of the entropy \eqref{entropy}.

\section{A new three-parametric group entropy: the $S_{q,\alpha,\beta}$ entropy}
To illustrate the potential richness of the theory previously developed, we wish to present here a new nontrivial multi-parametric entropic functional, obtained as a special case of the construction sketched above. It reads
\bea
\nn S_{\alpha, \beta, q}[p]:=\frac{k_{\mathcal{B}}}{1-q} \sum_{i=1}^{W} p_i \bigg( \alpha p_{i}^{-2 (1-q)} +\frac{1}{2}\left(1-3\alpha +\beta\right) p_i^{-(1-q)} +\\  \frac{1}{2}\left(\alpha-1-3\beta\right) p_i^{(1-q)}+\beta p_{i}^{2(1-q)} \bigg). \label{Sqab}
\eea
The main properties of this entropy are the following.

\noi $i)$
\[
\lim_{q\rightarrow 1} S_{\alpha, \beta, q}[p]= S_{BG}[p]
\]
irrespectively of the choice of the parameters.

\noi $ii)$ The entropy \eqref{Sqab} is trace-form.

\noi $iii)$ It satisfies the first three SK axioms; in particular, it is concave, for instance, for $1/2<q<3/2$, $0<\alpha<1/4$, $-1/4<\beta<0$.

\noi $iv)$ The parameters $\alpha, \beta, q$ are independent; consequently, the entropy \eqref{Sqab} is not reducible to the previously discussed two-parametric classes.

\noi $v)$ It is weakly composable, with expansion
\bea
\nn S_{\alpha, \beta, q}[p]:=k_{\mathcal{B}} \sum_{i=1}^{W} p_i \bigg\{ \ln \frac{1}{p_i}+\frac{3}{2}(\alpha+\beta)(1-q)\left(\ln \frac{1}{p_i}\right)^2+ \\  \frac{1}{6}\left(1+6\alpha-6\beta\right)\left(1-q\right)^2  \left(\ln \frac{1}{p_i}\right)^3 +\ldots \bigg\} \label{SIV}
\eea
The entropy \eqref{SIV} is in the group entropy class, but it was not considered explicitly in \cite{Tempesta4}. Also, it is not a particular case of $S_{c,d}$ or $S_{BR}$. This example is just a representative of a large class of entropic functionals, each of them multiparametric, that can be constructed by using the group-theoretical framework previously discussed. However, the complete analysis of this class, which is a specific realization of the notion of universal-group entropy, is outside the scopes of this paper.

\section{Distribution functions and thermodynamic properties}

We shall discuss the maximization of group entropies under appropriate constraints: we adopt a generalized maximum entropy principle (see e.g. \cite{Kaniad2}, \cite{AT}). We will see that the Legendre structure of classical thermodynamics, at least in some aspects, is preserved in our group-theoretical framework.

%The distribution functions associated to the class of entropies \eqref{UGE} can be deduced from the generalized logarithm associated with it. By means of a maximum entropy principle (with ordinary constraints), the distribution function is essentially given by the inverse of the generalized logarithm.

\noi Precisely, let
\[
Log_{U}[\epsilon]=G\left(\ln \epsilon\right) \label{LogU}
\]
where $G(t)$ is the power series \eqref{I.2}, with the constraint \eqref{conc}. Consider an isolated system in a stationary state (\textit{microcanonical ensemble}). The optimization of $S_U$ leads to the equal probability case, i.e. $p_i=1/W, \forall i$. Therefore, we have
\beq
S_U[p]=k_{\mathcal{B}} Log_U W,
\eeq
which reduces to the celebrated Boltzmann formula $S_{BG}=k_{\mathcal{B}}\ln W$ in the case of uncorrelated particles.

Let us consider a system in thermal contact with a reservoir (\textit{canonical ensemble}). We introduce the numbers $\epsilon_i$, interpreted as the values of a physically relevant observable, typically the value of the energy of the system in its $i$th state.  Assume that $p_i(\epsilon_i)$ is a normalized and monotonically decreasing distribution function of $\epsilon_i$. The \textit{internal energy} $\mathcal{V}$ in a given state is defined as $\mathcal{V}=\sum_{i=1}^{W}\epsilon_i p_i(\epsilon_i)$.

As usual, we shall study the variational problem of the existence of a stationary distribution $\widetilde{p}_i(\epsilon)$. However, this analysis can not be performed in direct analogy with the standard case. We introduce the functional
\beq
L=S_{G}[p]-\alpha\left[\sum_{i}p(\epsilon_i)-1\right]-\beta\left[\sum_{i=1}^{W}\epsilon_i p_i(\epsilon_i)-\mathcal{V}\right], \label{Leg}
\eeq
where $\alpha$ and $\beta$ are Lagrange multipliers. The vanishing of the variational derivative of this functional with respect to the distribution $p_i$ provides the stationary solution
\beq
\widetilde{p}_i=\frac{E(-\alpha-\beta \epsilon_i)}{Z},
\eeq
with $Z=\sum_{i=1}^{W}E(-\alpha-\beta \epsilon_i)$, and $E(\cdot)$ is an invertible function.

%However, only in particular cases (e.g. for the Boltzmann, Tsallis, Kaniadakis, Borges--Roditi entropies) this function can be identified with the inverse of $Log_{G}$, according to the analysis of the role of generalized exponentials in thermodynamics performed in \cite{Kaniad2}. Nevertheless, for generalized entropies constructed as a realization of the universal formal group that no longer belong to the trace--form class, one can contemplate further cases in which $E(\cdot)$ can be set to be the inverse of $Log_{G}$.

%Notice that a closed expression for the inverse of $Log_{G}$ can be analytically determined in very specific situations, since the inversion of a formal group logarithm would involve the solution of
%polynomial equations of high degree.  Although we will keep the discussion at a formal level, however, we can assume that at least a numerical interpolating solution, in several cases and for specific subsets %of the space of parameters, be available.
\noi However, as already pointed out in \cite{Tempesta4}, only in particular cases we are able to identify $E$ with the inverse of a generalized logarithm $Log_{G}$.

%Under this assumption, we have
%\[
%E(\epsilon)=e^{F(\epsilon)},
%\]
%where $F(s)$ is the compositional inverse of $G(t)$, given by eq. \eqref{I.1}. The power series $F(s)$ can be formally computed by using the Lagrange inversion principle, and in several cases takes an %explicit form in terms of elementary or special functions.

%If we eliminate the parameter $\alpha$ by means of the constraint $\sum_i p(\epsilon_i)\frac{\delta L}{\delta p(\epsilon_i)} \mid_{p=\widetilde{p}}=0$,
%and performing the Legendre transform of $Log_U(Z)$,

In \cite{Kaniad2}, \cite{KLS}, the class of entropies allowing a treatment with the variational approach described above has been determined. In the Legendre-Massieu framework, one can derive the interesting relation
\beq
Log_{G}(Z)+\beta \mathcal{V}=S_{G}.
\eeq
The previous equation can be used to introduce a thermodynamic observable $T$, which has the interpretation of a local temperature for a non-equilibrium metastable state. Precisely, we can define it from $
\frac{\partial S_U}{\partial \mathcal{V}}=\frac{1}{T}.
$
Analogously, a generalized free energy can be introduced according to $\mathcal{F}=\mathcal{V}-TS_{U}$. However, in the context of Tsallis entropy, the variational approach leading to the definition of an effective temperature is based on more specific constraints (escort distributions \cite{Tbook}).

The determination of a group-theoretical procedure to construct the appropriate constraints for an entropy of the class \eqref{entropy}   is an open problem.
\section{On the asymptotic behaviour of generalized entropies}

From a mathematical point of view, the study of the asymptotic behaviour of a given entropy, in the limit of large size systems, is not a well defined task, even for the standard Boltzmann-Gibbs entropy.

Let $W$ be the number of microscopic states admitted by a system.  We shall focus on the case of large size systems (under the hypothesis that $W\rightarrow\infty$ for $N\to \infty$). An entropy is then a functional $S=S[p]$ defined on the space $\mathcal{P}_{\infty}$.

A simple argument proves that, depending on the choice of the distribution in $\mathcal{P}_{\infty}$, we can get infinitely many different limits even for the $S_{BG}$ case. Indeed, consider the probability distribution $\{p\}=(1,0,0,\ldots)$ with infinitely many entries. Then $S_{BG}[p]=0$. If $\{p\}=(1/2,1/2,0,0,\ldots)$, then $S_{BG}[p]=\ln 2$ (in units of $k_{\mathcal{B}}$). Let $\mathcal{N}=10^{80}$ (i.e., the estimated number of atoms in the observable Universe), and $\{p\}= (\underbrace{1/\mathcal{N}, 1/\mathcal{N}, \ldots, 1/\mathcal{N}}_{\mathcal{N}-times}, 0,0, \ldots)$, then $S_{BG}[p]=80 \cdot \ln 10\simeq 184.2$. From the point of view of probability and information theory, there is no way to have uniqueness of the limit on the full space $\mathcal{P}_{\infty}$.

In the domain of classical thermodynamics, a priori the same objection applies. However, if we accept to restrict  to the important case of the uniform distribution (which is not the only physically interesting case), then one can define properly a \textit{thermodynamic limit of an entropy on the uniform distribution}. From a physical perspective, this would correspond to the case of a double limit, both of large system size and of large times.

%The only comparison among entropies that makes sense is when we compute them on the same distribution. It is reasonable to assume the uniform distribution $p_i=1/W$ for all $i$ as a relevant limiting one.

The only meaningful way to compare the behaviour of different entropies is to compute them all  \textit{in the same regime}. Once we accept to restrict to uniform probabilities, then the known entropies assume the form of the following asymptotic functions: either $(ln W)^{a}$ or $W^{b}$ (possibly multiplied by parameters), or the product of these forms. Presently, we cannot exclude that other forms could also be possible.

All this is a consequence of the analysis of scaling laws performed in \cite{Hanel}. However, it must be noticed that these simple functions of $W$ are not entropic functionals, and they cannot be obtained from a known entropy by specializing its parameters.

%The fact that the asymptotic behaviour is shared by all known entropies does not imply, in the non-asymptotic regime, a mathematical relation among them.

%We can say that an entropy is more general than another one only if the function expressing the former specializes into the one expressing the latter, \textit{at any regime}. In my opinion, Tsallis entropy and Kaniadakis entropy are intrinsically different, despite that they share the same asymptotic behavior.

The problem of ``comparing" entropies in the large size limit finds its most simple answer in the regime, if exists, where they are extensive. In classical thermodynamics, the only reason to consider generalized entropies is the fact that they can be \textit{extensive} in regimes where the $S_{BG}$ entropy is not. This is the truly important limiting property.

%An entropy is (thermodynamically) good, i.e. meaningful, in the large size limit if in an appropriate regime, described by a suitable $W=W(N)$, is extensive.

What is crucial is that \textit{all admissible entropies have the same behavior} in the  regime where they are extensive, i.e. proportional to the number $N$ of particles of the system.

Therefore, it is not  surprising that different entropies could share the same asymptotic behavior.

%It is not surprising at all that different entropies could share the same asymptotic behavior. If we want to do thermodynamics, the only regime truly important is the \textit{extensive} one. In this regime,  \textit{all admissible entropies behave the same}: all different entropies, each evaluated on a suitable occupation law $W=W(N)$,  are proportional to $N$.

 %In the asymptotic regime, admissible entropies agree in their general form, and in the extensive regime they must be \textit{indistinguishable}. In non-asymptotic regimes, they are just different.

When we are interested in more general contexts as probability theory and information theory, then all of the infinitely many possible distributions are a priori relevant. Therefore, according to the previous discussion, the comparison among asymptotic behaviors of different entropies loses its meaning.

\section{Open problems and future perspectives}

The main message of this paper is that \textit{the notion of entropy is intimately related with group theory}. Indeed, admissible entropies have associated a group law, controlling crucially their properties.

The universal-group entropy is a flexible tool, providing new insight in different contexts of the theory of complex systems. For instance, it could offer a more general approach to the description of the evolution of biological complexity \cite{AOC}.
The role of the universal-group entropy in geometric information theory should be properly clarified. New examples of Kullback-Leibler divergences are presently under investigation. They could provide refined tests of similarity, for instance, in the analysis of genomic sequences. A nonadditive entropy taken from the class \eqref{entropy} a priori could be more appropriate than the Shannon one, in order to take into account the so-called \textit{epistatic correlations} between sites in a genomic chain.

An open problem is to find constructively classes of new entropies obtained as special cases of the general definition \eqref{entropy}. New results along these lines are contained in \cite{Tempestaprepr}.

Another important aspect that deserves to be clarified is the possible number-theoretical content of the notion of universal group entropy. In \cite{Tempesta4}, a connection between a set of entropic functionals and a class of Dirichlet series has been established. It has been proved that the formal group exponentials defining a specific family of entropies can be used to define $L$-functions. The most paradigmatic case is offered by Tsallis entropy, directly related to the Riemann zeta function. A construction of universal congruences and generalized Bernoulli polynomials connected with formal groups has been provided in \cite{Tempesta5}
and in \cite{Tempesta2}. It would be interesting to realize explicitly this picture in the general group-theoretical framework developed in this work.

%\section{Appendix Proof of Admissibility of the universal entropy.}
\appendix
\section{The Shannon-Khinchin axioms}
In the original formulation  due to Khinchin \cite{Khinchin}, the Shannon-Khinchin axioms were called ``properties"  that guarantee Khinchin's uniqueness theorem for the Boltzmann-Gibbs entropy $S$. In particular, the properties of continuity and maximum were unified into a unique statement. Here we propose a modern version, in terms of four requirements.

(SK1) (Continuity). The function $S(p_1,\ldots,p_W)$ is continuous with respect to all its arguments

(SK2) (Maximum principle).  The function $S(p_1,\ldots,p_W)$ takes its maximum value for the uniform distribution $p_i=1/W$, $i=1,\ldots,W$.

(SK3) (Expansibility). Adding an impossible event to a probability distribution does not change its entropy: $S(p_1,\ldots,p_W,0)=S(p_1,\ldots,p_W)$.

(SK4) (Additivity). Given two subsystems $A$, $B$ of a statistical system,
\[
S(A \cup B)=S(A)+S(B\mid A);
\]
here $S(B\mid A)$ denotes the conditional entropy associated with the conditional distribution $p_{ij}(B\mid A)$.
%In particular, for trace-form entropies $S[p_i]=\sum_i h(p_i)$, the axiom (SK2) implies that $h$ is a concave function.

\vspace{3mm}

\textbf{Acknowledgments}.
I wish to thank heartily prof. C. Tsallis for a careful reading of the manuscript and many useful discussions, and prof. G. Parisi for reading the manuscript and for encouragement. Interesting discussions with prof. A. Gonz\'alez-L\'opez, R. A. Leo and M. A. Rodr\'iguez are also gratefully acknowledged.

This work has been supported by the research project FIS2011--22566, Ministerio de Ciencia e Innovaci\'{o}n, Spain.
%\end{acknowledgments}

\end{document}